\newtheorem{theorem}{Theorem}
\newtheorem{proof}{Proof}
\newtheorem{assumption}{Assumption}
\newtheorem{example}{Example}
\title{Subsampling-based Tests in Mediation Analysis}
\author{ Asmita Roy$^1$, Huijuan Zhou $^2$, Ni Zhao $^1$, Xianyang Zhang $^3$\footnote{Corresponding author: zhangxiany@stat.tamu.edu} \smallskip \\
	$^1$Department of Biostatistics, Johns Hopkins University\\
	$^2$School of Statistics and Management, \\Shanghai University of Finance and Economics\\
	$^3$ Department of Statistics, Texas A\&M University}
    \date{}
\begin{document}
{\centering\maketitle}
\begin{abstract}
Testing for mediation effect poses a challenge since the null hypothesis (i.e., the absence of mediation effects) is composite, making most existing mediation tests quite conservative and often underpowered. In this work, we propose a subsampling-based procedure to construct a test statistic whose asymptotic null distribution is pivotal and remains the same regardless of the three null cases encountered in mediation analysis. The method, when combined with the popular Sobel test, leads to an accurate size control under the null. We further introduce a Cauchy combination test to construct p-values from different subsample splits, which reduces variability in the testing results and increases detection power. Through numerical studies, our approach has demonstrated a more accurate size and higher detection power than the competing classical and contemporary methods.
\end{abstract}

\noindent
{\it Keywords:} Cauchy combination; Mediation analysis;  Sobel's test; Subsampling

\section{Introduction}
Mediation analysis serves as a powerful statistical framework for elucidating causal relationships along biological pathways in the human body. A compelling example lies in the mechanism of Metformin, a widely prescribed anti-diabetic medication, which influences inflammation markers in the blood through its effects on serum Short Chain Fatty Acids (SCFAs) \citep{vinolo2011regulation,mueller2021metformin}.
Let $\alpha$ denote the exposure's causal effect on the mediator, and $\beta$ represent the mediator's causal effect on the outcome, conditional on the exposure. Under classical linear structural equation modeling and appropriate identification assumptions, the causal mediation effect is proportional to the product $\alpha\beta$ \citep{imai2010general}. This product-based representation extends beyond linear models to various other contexts, including generalized linear models and survival models \citep{huang2016mediation,vanderweele2010odds,vanderweele2011causal}.
A fundamental question in mediation analysis is to examine the existence of a mediation effect, which can be formulated as testing the null hypothesis $H_0:\alpha\beta=0$ against the alternative hypothesis $H_a:\alpha\beta\neq 0$. This null hypothesis encompasses three distinct scenarios: $H_0=H_{01}\cup H_{10}\cup H_{00}$, where (i) $H_{01}: \alpha=0,\beta\neq0$, (ii) $H_{10}: \alpha\neq 0,\beta=0$ and (iii) $H_{00}: \alpha=\beta=0$.
Traditional mediation tests, such as the Sobel test \citep{sobel1982asymptotic} and MaxP test \citep{mackinnon2002comparison}, rely on asymptotic distributions derived from scenarios (i) and (ii). However, these approaches can yield overly conservative results when analyzing genomics data, as most omics markers fall under scenario (iii). To address this limitation, we propose a modified version of the classical Sobel test. The key insights and steps behind the construction of our testing procedure can be summarized as follows:
\begin{itemize}
    \item Develop a test statistic for the mediation effect whose asymptotic null distribution belongs to the same parametric family under all null scenarios ($H_{00}$, $H_{01}$, and $H_{10}$), differing only in their scaling parameters.
    \item Implement a subsampling-based approach to construct a scale-invariant studentized statistic whose asymptotic null distribution remains the same across all null scenarios. This ensures the statistic's distribution is independent of whether the null hypothesis falls under $H_{00}$, $H_{01}$ or $H_{10}$.
    \item Apply the Cauchy combination method to aggregate p-values obtained from various subsample splits, yielding a stable and powerful test.
\end{itemize}
Through comprehensive numerical studies, we demonstrate that our proposed test, named the Cauchy-combined Studentized Mediation Test or CSMT, achieves both superior size control and enhanced power compared to both classical and contemporary approaches. The complete R implementation for reproducing our simulation results is freely available at \url{https://github.com/asmita112358/CSMT}.

\section{Sobel's test}
Consider a general structural equation model where coefficient $\alpha$ represents the causal effect of the exposure on the mediator, while coefficient $\beta$ quantifies the mediator's effect on the outcome, conditional on the exposure. We impose the following assumptions, which hold for most structural equation models.
\begin{assumption}\label{as1}
    The coefficients $\alpha$ and $\beta$ admit consistently asymptotically normal estimators $\hat{\alpha}$ and $\hat{\beta}$, with the following asymptotic distribution
    $$n^{1/2}\begin{pmatrix}
    \hat{\alpha} - \alpha\\
    \hat{\beta} - \beta
\end{pmatrix} \implies  N\left( \begin{pmatrix}
    0\\
    0
\end{pmatrix},
\begin{pmatrix}
    \sigma_1^2 &0\\
    0 &\sigma_2^2
\end{pmatrix}\right),$$
where $n$ denotes the sample size and ``$\implies$" denotes weak convergence.
\end{assumption}
\begin{assumption}\label{as2}
The variances $\sigma^2_1$ and $\sigma^2_2$ admit consistent estimators $s^2_1$ and $s^2_2$.
\end{assumption}

Let $T_{n,
\alpha} = n^{1/2}\hat{\alpha}/s_1$ and $T_{n,
\beta} = n^{1/2}\hat{\beta}/s_2$ be the t-statistics for testing $\alpha=0$ and $\beta=0$ respectively. 
The Sobel statistic is defined as 
$$S_n = \frac{n^{1/2}\hat{\alpha}\hat{\beta}}{(\hat{\alpha}^2s^2_2  + \hat{\beta}^2s^2_1)^{1/2}},$$
which can be re-written in terms of the t-statistics, $T_{n,\alpha}$ and $T_{n,\beta}$, as 
\begin{align}\label{sn}
S_n =  \frac{T_{n,\alpha}T_{n,\beta}}{(T_{n,\alpha}^2 + T_{n,\beta}^2)^{1/2}}. 
\end{align}
We present the following result concerning the asymptotic distribution of the Sobel test under the composite null, which has been proved in \cite{glonek1993behaviour}. For completeness, we present the proof below.

\begin{theorem}\label{thm1}
Under Assumptions \ref{as1}-\ref{as2}, $S_n$ converges weakly to a normal distribution with mean 0 and variance $\tau$, where
    \begin{align*}
        \tau = 
        \begin{cases}
        1,  & \alpha=0,\beta\neq 0 \text{ or } \alpha\neq 0,\beta=0,\\
        1/4, & \alpha=\beta=0.
        \end{cases}
    \end{align*}
\end{theorem}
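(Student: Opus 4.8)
The plan is to analyze the limiting behavior of the two $t$-statistics $T_{n,\alpha}$ and $T_{n,\beta}$ separately under each null configuration, and then recombine them through the representation \eqref{sn}. First I would record the elementary limits. Whenever a coefficient equals zero, the corresponding $t$-statistic is asymptotically standard normal: if $\alpha=0$, then $n^{1/2}\hat\alpha \implies N(0,\sigma_1^2)$ by Assumption~\ref{as1}, and since $s_1 \to \sigma_1$ in probability by Assumption~\ref{as2}, Slutsky's theorem gives $T_{n,\alpha} \implies N(0,1)$, with the symmetric statement for $T_{n,\beta}$. On the other hand, when a coefficient is nonzero the corresponding statistic diverges: asymptotic normality forces $\hat\beta \to \beta$ in probability, so $T_{n,\beta}=n^{1/2}(\hat\beta/s_2)$ satisfies $|T_{n,\beta}|\to\infty$ while $\mathrm{sign}(T_{n,\beta})\to\mathrm{sign}(\beta)$.

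For $H_{01}$ and $H_{10}$ (exactly one coefficient zero) I would treat the representative case $\alpha=0,\beta\neq0$, the other being symmetric. Dividing numerator and denominator of \eqref{sn} by $|T_{n,\beta}|$ gives
\begin{align*}
S_n = \frac{T_{n,\alpha}\,\mathrm{sign}(T_{n,\beta})}{\left(T_{n,\alpha}^2/T_{n,\beta}^2 + 1\right)^{1/2}}.
\end{align*}
Here $T_{n,\alpha}=O_p(1)$ while $|T_{n,\beta}|\to\infty$, so $T_{n,\alpha}^2/T_{n,\beta}^2 \to 0$ in probability and $\mathrm{sign}(T_{n,\beta})\to\mathrm{sign}(\beta)$. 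Slutsky's theorem then yields $S_n \implies \mathrm{sign}(\beta)\,Z$ with $Z\sim N(0,1)$, which is again $N(0,1)$; hence $\tau=1$.

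For $H_{00}$ ($\alpha=\beta=0$) I would first upgrade the marginal limits to a joint one: the diagonal covariance in Assumption~\ref{as1}, combined with Assumption~\ref{as2}, gives $(T_{n,\alpha},T_{n,\beta}) \implies (Z_1,Z_2)$ with $Z_1,Z_2$ independent $N(0,1)$. The map $g(x,y)=xy/(x^2+y^2)^{1/2}$ is continuous off the origin, and the limit law assigns zero mass to the origin, so the continuous mapping theorem gives $S_n \implies W := Z_1Z_2/(Z_1^2+Z_2^2)^{1/2}$. The only genuinely nontrivial step is to identify the law of $W$. I would pass to polar coordinates $Z_1=R\cos\theta$, $Z_2=R\sin\theta$, where $R$ is Rayleigh-distributed and $\theta\sim\mathrm{Unif}[0,2\pi)$ are independent. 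Then $W=\tfrac12 R\sin(2\theta)$, and since $\phi:=2\theta \bmod 2\pi$ is again uniform on $[0,2\pi)$ and independent of $R$, the Box--Muller identity shows $R\sin\phi\sim N(0,1)$. Consequently $W=\tfrac12 R\sin\phi \sim N(0,1/4)$, giving $\tau=1/4$.

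The main obstacle is precisely this last identification: recognizing that $W$ is not merely a ratio that happens to have variance $1/4$ but is in fact exactly Gaussian. The polar-coordinate/Box--Muller reduction is what makes this transparent, whereas a direct density computation would be considerably more laborious. Every other step reduces to routine applications of Slutsky's theorem and the continuous mapping theorem.
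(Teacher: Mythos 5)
Your proof is correct. The $H_{00}$ case is essentially identical to the paper's argument: joint convergence of $(T_{n,\alpha},T_{n,\beta})$ to independent standard normals, continuous mapping, and the polar-coordinate/Box--Muller identification of $W=\tfrac12 R\sin(2\theta)$ as $N(0,1/4)$ (you are in fact slightly more careful than the paper in checking that the discontinuity set of $g(x,y)=xy/(x^2+y^2)^{1/2}$ is null under the limit law). Where you genuinely diverge is the $H_{01}/H_{10}$ case. The paper applies the multivariate delta method to $n^{1/2}\hat\alpha\hat\beta$, obtains the limit $N(0,\alpha^2\sigma_2^2+\beta^2\sigma_1^2)$, and then studentizes via Slutsky; you instead stay entirely inside the $t$-statistic representation of $S_n$, divide through by the diverging statistic $|T_{n,\beta}|$, and conclude $S_n\implies \mathrm{sign}(\beta)\,Z\sim N(0,1)$. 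Both arguments are valid. Your route has the advantage of treating all three null configurations through the single representation $S_n=T_{n,\alpha}T_{n,\beta}/(T_{n,\alpha}^2+T_{n,\beta}^2)^{1/2}$, which makes the source of the discontinuity at $\alpha=\beta=0$ (one bounded factor versus two) more transparent and avoids invoking the delta method for a degenerate-gradient-adjacent situation; the paper's route is the classical Sobel derivation and generalizes more directly to settings where one works with $\hat\alpha\hat\beta$ and its estimated standard error rather than with the individual $t$-statistics. One cosmetic point: when you invoke Slutsky to combine $T_{n,\alpha}\implies Z$ with the factors $\mathrm{sign}(T_{n,\beta})$ and $(T_{n,\alpha}^2/T_{n,\beta}^2+1)^{-1/2}$, it is worth saying explicitly that both of the latter converge in probability to constants ($\mathrm{sign}(\beta)$ and $1$ respectively), which is what licenses the application; as written this is implicit but clear.
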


\begin{proof}
Under Assumptions \ref{as1}-\ref{as2} and $H_{01}$ or $H_{10}$, by the multivariate delta method,
$$ n^{1/2} \hat{\alpha}\hat{\beta} = n^{1/2}( \hat{\alpha}\hat{\beta} - \alpha\beta) = n^{1/2}\hat{\alpha}(\hat{\beta}-\beta)+ n^{1/2}\beta(\hat{\alpha}-\alpha)\implies N(0, \alpha^2\sigma^2_{2} + \beta^2\sigma^2_{1}),$$
where the variance can be consistently estimated by $\hat{\alpha}^2 s_1^2 + \hat{\beta}^2s_2^2$. Hence by slutsky's theorem, $S_n$ converges weakly to $N(0,1)$ under $H_{01}$ or $H_{10}$.

When $\alpha = \beta = 0$, the multivariate delta method cannot be used. In this case, we note that $T_{n,\alpha}$ and  $T_{n,\beta}$ converges weakly to $N(0,1)$ and they are asymptotically independent. By (\ref{sn}), the asymptotic distribution of the Sobel statistic is equivalent to the distribution of 
$$W = \frac{Z_1Z_2}{(Z_1^2 + Z_2^2)^{1/2}},$$ 
where $Z_1$ and $Z_2$ are independent standard normal random variables. Consider the polar transformation: $Z_1 = R\cos(\theta)$ and $Z_2 = R\sin(\theta)$, where $R$ follows a Rayleigh distribution, $\theta$ follows the uniform distribution over $(0,2\pi)$ (denoted by $U(0,2\pi)$),
and $R$ and $\theta$ are independent with each other. Thus, $W$ can be written as
$W = R\sin(2\theta)/2.$
Since $\theta \sim U(0, 2\pi)$, $\sin(2\theta)$ has the same distribution as that of $\sin(\theta)$. Therefore, $R\sin (2\theta) \sim N(0,1)$ and $W \sim N(0, 1/4)$.
\end{proof}


\section{A Subsampling-based test}\label{sec-t-test}
We introduce a new methodology to address the challenge of obtaining a test with accurate size under the composite null of no mediation effect. Our idea is to partition the data into $K$ subsamples, namely $\mathcal{G}_1,\ldots,\mathcal{G}_K$ such that
$\mathcal{G}_i\cap \mathcal{G}_j=\emptyset$ for $i\neq j$ and $\cup_{i=1}^K\mathcal{G}_i=[n]:=\{1,\ldots,n\}$. For each subsample $\mathcal{G}_i$, we compute the corresponding Sobel statistic denoted by $S_{\mathcal{G}_i}$ for $i=1,\ldots,K$. Let $\bar{S}_K=\sum^{K}_{i=1}S_{\mathcal{G}_i}/K$.
Depending on the true state of the composite null hypothesis, we have
\begin{align*}
S_{\mathcal{G}_i}\implies\begin{cases}
N(0,1) & \text{ if } \alpha=0,\beta\neq 0 \text{ or } \alpha\neq 0,\beta=0,\\
N(0,1/4) & \text{ if } \alpha=\beta=0,
\end{cases}
\end{align*}
and $S_{\mathcal{G}_i}$s are asymptotically independent. We define the subsampling-based t-statistic as 
\begin{align*}
\mathcal{T}_n = \frac{K^{1/2}\bar{S}_K}{\left\{\frac{1}{K-1}\sum^{K}_{i=1}(S_{\mathcal{G}_i}-\bar{S}_K)^2\right\}^{1/2}}.  
\end{align*}
\begin{theorem}\label{thm2}
Under $H_{0}$ irrespective of the null types and for fixed $K$, we have 
    \begin{align*}
        \mathcal{T}_n \implies\frac{K^{1/2}\bar{Z}_K}{\left\{\frac{1}{K-1}\sum^{K}_{i=1}(Z_i-\bar{Z}_K)^2\right\}^{1/2}}\sim t_{K-1}.
    \end{align*}
    where $Z_1,\dots,Z_K \overset{i.i.d}{\sim} N(0,1)$, $\bar{Z}_K=\sum^{K}_{i=1}Z_i/K$ and $t_{K-1}$ represents the $t$ distribution with $K-1$ degrees of freedom. 
\end{theorem}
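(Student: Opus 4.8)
The plan is to reduce the entire statement to the classical fact that the one-sample $t$-statistic formed from $K$ i.i.d.\ $N(0,1)$ draws is exactly $t_{K-1}$-distributed, exploiting the scale invariance of $\mathcal{T}_n$ to neutralize the unknown limiting variance $\tau$.

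First I would assemble the joint limit of the subsample statistics. By Theorem~\ref{thm1}, each $S_{\mathcal{G}_i}$ converges marginally to $N(0,\tau)$, where $\tau\in\{1,1/4\}$ is determined by which of the three null types holds; crucially, the value of $\tau$ is the \emph{same} for every subsample, since all subsamples share the same underlying population parameters $(\alpha,\beta)$ and hence fall under the identical null type. Combining this common marginal limit with the stated asymptotic independence of the $S_{\mathcal{G}_i}$, the vector $(S_{\mathcal{G}_1},\dots,S_{\mathcal{G}_K})$ converges weakly to $(\tau^{1/2}Z_1,\dots,\tau^{1/2}Z_K)$ with $Z_1,\dots,Z_K$ i.i.d.\ $N(0,1)$.

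Next I would observe that $\mathcal{T}_n$ is a scale-invariant function of its inputs. Writing $g(x_1,\dots,x_K)$ for the map sending a vector to its one-sample $t$-statistic, one has $g(cx_1,\dots,cx_K)=g(x_1,\dots,x_K)$ for every $c>0$, because multiplying all entries by $c$ rescales both the numerator $K^{1/2}\bar{x}$ and the sample standard deviation in the denominator by the same factor. Consequently the common scaling $\tau^{1/2}$ cancels, and the limit of $\mathcal{T}_n$ coincides with $g(Z_1,\dots,Z_K)$, independently of $\tau$. This is precisely the mechanism that renders the limiting law identical across $H_{00}$, $H_{01}$, and $H_{10}$, which is the whole point of the construction.

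The delivery of the limit is then a continuous mapping argument: $g$ is continuous except on the set where the sample variance vanishes (all coordinates equal), which has probability zero under the continuous law of $(Z_1,\dots,Z_K)$, so the continuous mapping theorem yields $\mathcal{T}_n\implies g(Z_1,\dots,Z_K)$. Finally, invoking the standard result that the one-sample $t$-statistic of $K$ i.i.d.\ standard normals follows $t_{K-1}$ completes the proof. I expect the only point genuinely requiring care to be the handling of the discontinuity set of $g$ in the continuous mapping step, together with the explicit observation that $\tau$ is common across subsamples; everything else is routine bookkeeping.
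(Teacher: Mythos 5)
Your proposal is correct and follows essentially the same route as the paper, whose proof is simply the one-line remark that the result follows from applying Theorem \ref{thm1} to each subsample together with the continuous mapping theorem. You have merely filled in the details the paper leaves implicit --- the joint limit via asymptotic independence, the scale invariance cancelling $\tau$, and the null set where the sample variance vanishes --- all of which are the right details to supply.
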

\begin{proof}
The result follows by applying Theorem \ref{thm1} to each subsample and the continuous mapping theorem. 
\end{proof}
In view of theorem \ref{thm2},
we reject $H_0$ if $|\mathcal{T}_n| > q_{K-1}(1-\alpha/2)$, where $q_{K-1}(1-\alpha/2)$ is the $1-\alpha/2$ quantile of $t_{K-1}$. The corresponding p-value can be defined as $p=P(|t_{K-1}|>|\mathcal{T}_n|)$. In practice, subsamples are created by randomly permuting the data and generating \( K \) splits. The sizes of each subsample are maintained as closely equal as possible. If the total sample size \( n \) is not divisible by \( K \), the remaining samples (of the size $n- K\lfloor n/K\rfloor$) are randomly redistributed among the subsamples.  For example, when $n=200$ and $K=7$, we have 4 subsamples of size 29 and 3 subsamples of size 28.


\section{Cauchy Combination for Repeated Sample Splitting}
The subsampling-based t-test is asymptotically accurate under the three null types; however, it has two notable issues: (i) the test results are influenced by how the data is split, and (ii) our simulations have empirically shown that the subsampling-based t-test may experience a loss of power in certain situations. To address these challenges, we shall employ a repeated subsampling procedure and combine the p-values from different splits using the Cauchy combination test \citep{liu2020cauchy}. Specifically, suppose we partition the data $M$ times and calculate the p-values $\{p_m \}_{m = 1}^M$ for each of those $M$ partitions. Then, the Cauchy combination statistic based on these $M$ p-values is given by 
\begin{align*}
C_m=\sum_{m = 1}^M w_m \tan (\pi (0.5 - p_m)),
\end{align*}
where $w_m$s are non-negative weights satisfying that $\sum_{m = 1}^M w_m= 1.$ In practice, we suggest setting $w_m=u_m/\sum^{M}_{i=1}u_i$, where $u_i$s are independently generated from $U(0,1)$.
We reject $H_0$ when $C_m>c_{1-\alpha}$ and the corresponding Cauchy combined p-value is given by $\text{pr}(C > C_m)=0.5-\arctan(C_m)/\pi$, where $c_{1-\alpha}$ is the $1-\alpha$ quantile of the standard Cauchy distribution, and $C$ follows a standard Cauchy distribution. As will be shown in Section \ref{sec:sim}, this testing procedure leads to improved power while keeping the size under control as compared to existing approaches.

\section{Choice of K}\label{seck}
The selection of \( K \) is critical to our procedure. Choosing a smaller \( K \) may lead to greater variation in the t-statistic, while a larger \( K \) can reduce the sizes of the subsamples, potentially harming the performance of the Sobel test. We recommend setting \( K = \lfloor 0.5 n^{1/2} \rfloor \) based on empirical evidence. This choice moderates the size across various settings while maintaining sufficiently high power. Simulations supporting our recommendation are presented in the Supplementary Materials; see Figure \ref{var_K2} therein for details.

\section{Numerical studies}\label{sec:sim}
We compare CSMT with three existing approaches, including the classical Sobel's test 
\citep{sobel1982asymptotic}, the MaxP test \citep{mackinnon2002comparison} and the adaptive bootstrap based test in \citep{10.1093/jrsssb/qkad129} (denoted by ABtest). We generate data from the following structural equation model:
\begin{equation}\label{simmodel}
    \begin{aligned}
    &G = \alpha S + \alpha_0 + \alpha_1 X_1 + \alpha_2 X_2 + \epsilon,\\
    &Y = \beta G + \beta_0 + \beta_1 X_1 + \beta_2 X_2 + \tau S + e,
\end{aligned}
\end{equation}
where $(X_1,X_2)$ are independently generated from $U(0,1)$, and $e$ and $\epsilon$ are independently generated from $N(0,1)$.
We report the numerical results based on 500 independent simulation runs. The proportions of the four types of hypotheses $H_{ij}$ in the simulation runs are denoted by $\pi_{ij}$ for $0 \leq i,j \leq 1$. Throughout, we set 
$M = 500$ in the Cauchy combination test.

\begin{example}
To study the size, we consider two sets of proportions: the dense null and the sparse null. In the dense null setup, we have $\pi_{00} = 0.4$, $\pi_{10} = 0.3$, $\pi_{01} = 0.3$, and $\pi_{11} = 0$. In the sparse null setup, the proportions are $\pi_{00} = 0.8$, $\pi_{10} = 0.1$, $\pi_{01} = 0.1$, and $\pi_{11} = 0$. 
For each simulation run, we generate $n = 600$ samples from Model \eqref{simmodel} under the hypotheses $H_{00}, H_{01}$, and $H_{10}$. We consider $(\alpha, \beta) = (0,0)$ under $H_{00}$, $(\alpha, \beta) = (r,0)$ under $H_{10}$ and $(\alpha, \beta) = (0,r)$ under $H_{01}$, where $r=0.1$ or 0.5. The value of $K$ has been consistently set to $\lfloor 0.5n^{1/2}\rfloor=12$.
The QQ plots of the negative of the common logarithm of the p-values and the bar plots of the empirical sizes for \( r = 0.1 \) are presented in Figures \ref{c1-q} and \ref{c1-b}, respectively. A dotted line segment marks the position of $-\log_{10}(0.05)$ along the $x$ and $y$ axes in the QQ plot. It is observed that the ABtest exhibits an inflated size, while the MaxP and Sobel tests appear to be conservative, showing non-uniform p-values under the two types of nulls. 
The p-value distribution of CSMT is closer to the uniform distribution, especially for the low quantile levels, which results in a more accurate size than the other methods. The corresponding figures for \( r = 0.5 \) have been presented in the Supplementary materials, (Figures \ref{c2-q} and \ref{c2-b}) where a similar pattern can be seen across the four different methods.

\begin{figure}[H]
    \centering
    \includegraphics[width = 0.8\linewidth]{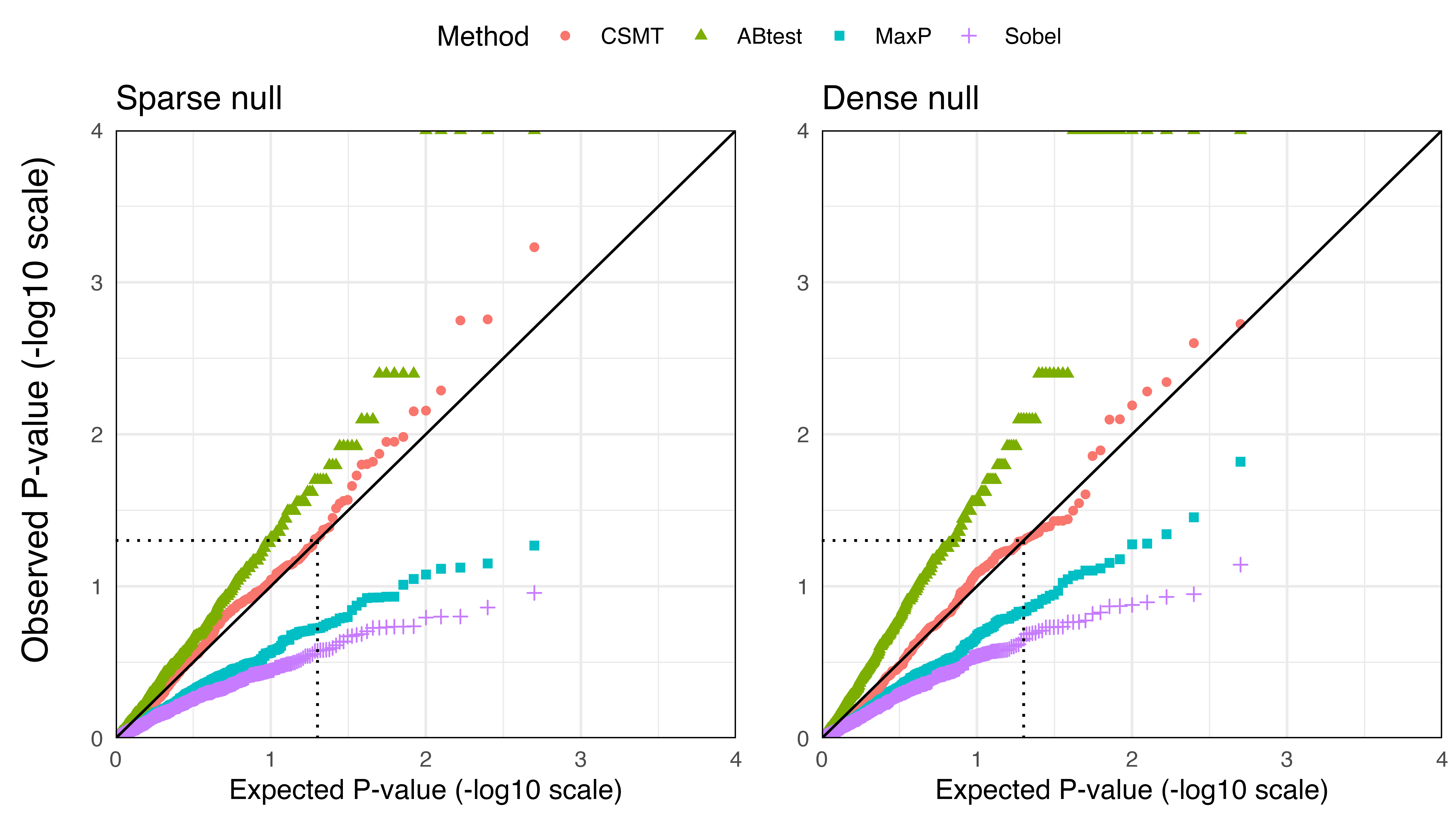}
    \caption{QQ plots of the p-values in log scale from the four competing methods under the sparse null (left) and dense null (right) with $r = 0.1$. The nominal level is 5\%.
    }
    \label{c1-q}
\end{figure}

\begin{figure}[H]
    \centering
    \includegraphics[width = 0.8\linewidth]{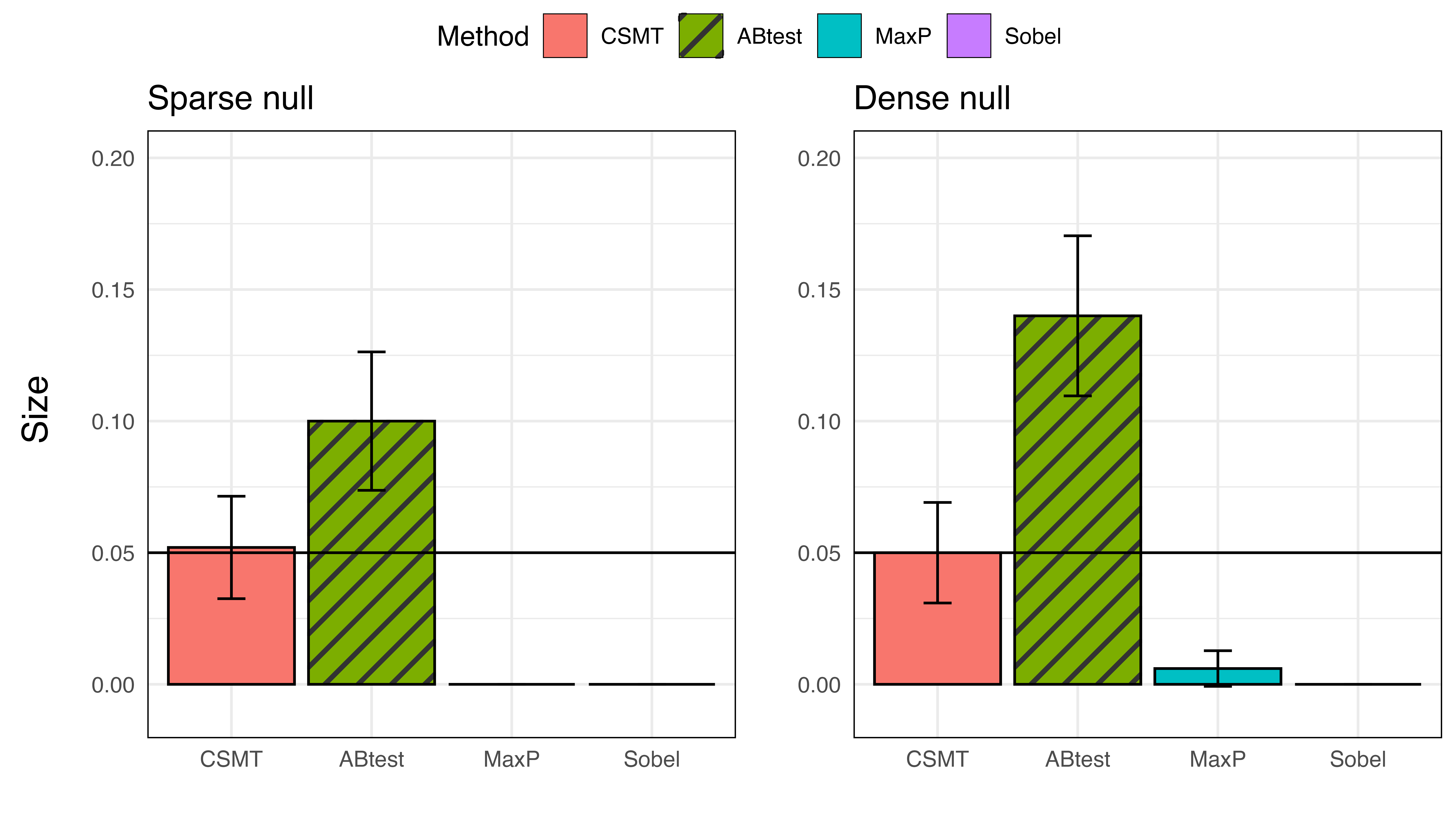}
    \caption{Empirical sizes for the four competing methods under the sparse null (left) and dense null (right) with $r = 0.1$. The nominal level is 5\%, represented by a horizontal line.}
    \label{c1-b}
\end{figure}

\end{example}

\begin{example}
For power comparison, we employ a setup similar to that described in \cite{10.1093/jrsssb/qkad129}. The data is simulated under two scenarios: (i) We fix \(\alpha = \beta\) and vary this value from 0.1 to 0.5. For each fixed value of \(\alpha\) and \(\beta\), we generate samples of sizes 100, 200, and 300, reporting the empirical power based on 500 independent simulation runs.  (ii) We set \(\alpha \beta = 0.3\) for \(n = 100\), \(\alpha \beta = 0.2\) for \(n = 200\), and \(\alpha \beta = 0.1\) for \(n = 300\). The empirical rejection rates, calculated from 500 independent simulations, are plotted against the ratio \(\alpha/\beta\). Plots for both scenarios across all sample sizes are provided in Figure \ref{fig:allpower}. The value of \(K\) has been determined based on the recommendation in Section \ref{seck}. The Sobel and MaxP tests are outperformed by the ABtest and CSMT. Interestingly, the power of the ABtest decreases as the signal strength increases, but it exhibits the highest power compared to all other methods at lower signal strengths. Here we would like to set a reminder that this improved power comes at the cost of inflated size. In contrast, CSMT shows the best power at higher signal strengths and performs well even with lower sample sizes.

\begin{figure}[H]
    \centering
    \includegraphics[width = \linewidth]{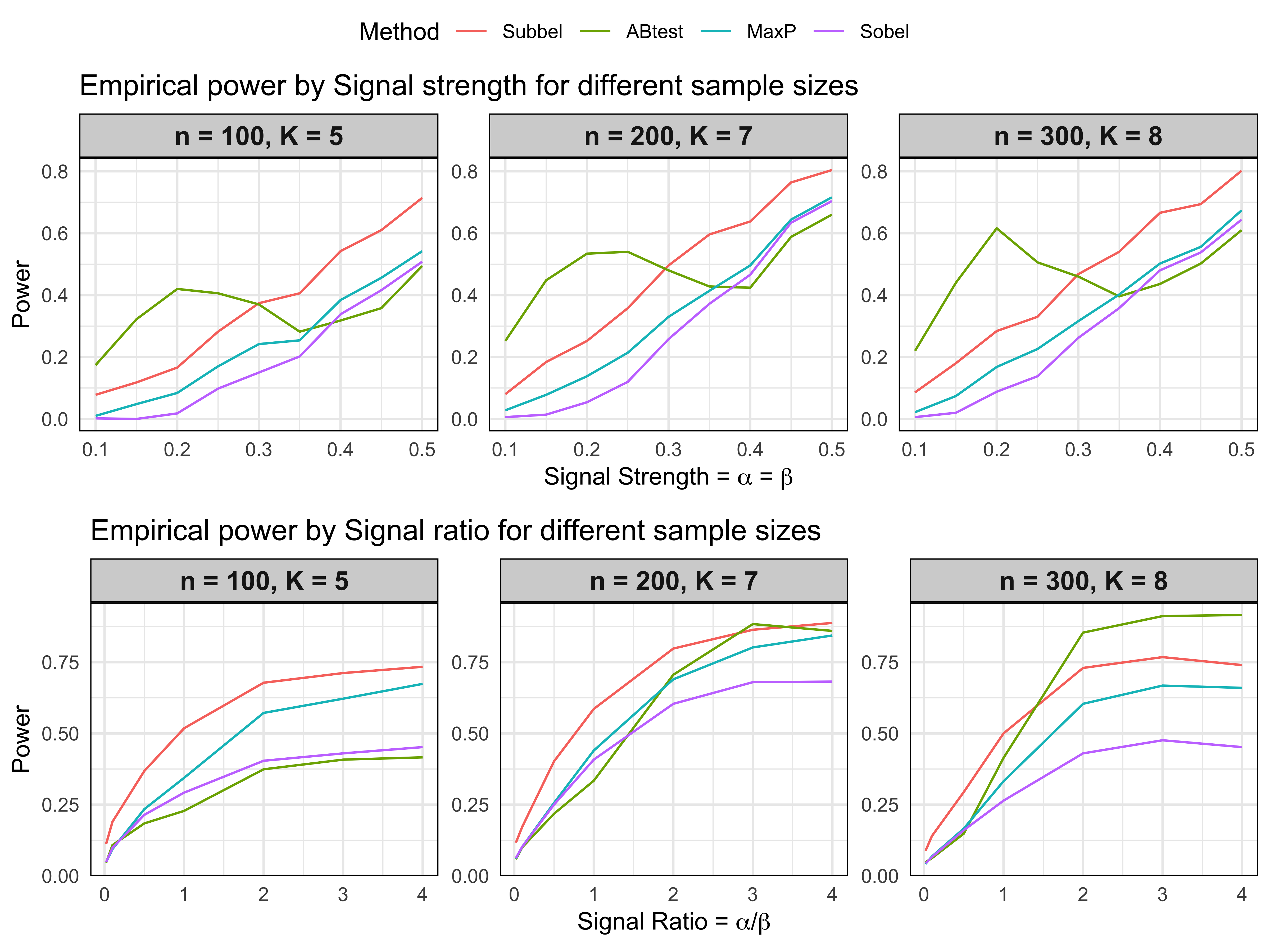}  
    \caption{Empirical powers for the four competing methods with $n\in\{100,200,300\}$ and varying signal strength (top) and varying signal ratio (bottom). The nominal level is 5\%.}
    \label{fig:allpower}
\end{figure}
\end{example}

\section{Real Data Analysis}
We analyzed a clinical data set from the SPIRIT trial (Survivorship Promotion in Reducing IGF‐1) \cite{Yeh2020, Yeh2021}, a randomized three-parallel-arm controlled clinical trial. In this trial, cancer survivors with overweight or obesity were randomized to three groups: (1) self-directed weight loss (control), (2) coach-directed weight loss, and (3) metformin treatment. For this study, we compared the metformin-treated group (42 participants) with the control group (40 participants), based on literature supporting metformin's impact on the gut microbiome, which plays a crucial role in fermenting indigestible dietary fibers into short-chain fatty acids (SCFAs) \citep{mueller2021metformin,de2017metformin}. SCFAs are vital substances known to regulate inflammation \citep{vinolo2011regulation,tedelind2007anti}.

In this analysis, we investigated the mediating role of SCFAs, measured at six months into the trial, on serum inflammation markers assessed at twelve months. Our analysis focused on two key inflammation markers, Interleukin-6 (IL-6) and C-reactive protein (CRP), and seven serum SCFAs. Elevated levels of IL-6 are indicative of inflammatory conditions, while increased CRP levels are associated with a heightened risk of heart disease and stroke. We tested the treatment-mediator-outcome hypothesis for all possible combinations of SCFAs and inflammation markers, with p-values reported in Table \ref{tab:SPIRIT}. Notably, Butyric Acid, Acetic Acid, and Valeric Acid were identified as significant mediators in regulating inflammation markers when treated with metformin. Interestingly, increased serum and fecal levels of butyrate have been found to lower blood pressure and reduce hypertension \citep{tilves2022increases}. Additionally, supplementation with sodium acetate and butyrate has been shown to decrease inflammatory markers in mice \citep{chen2022sodium}, which supports our findings. Overall, CSMT identifies more SCFA-inflammation marker pathways than competing methods.

\begin{table}[H]
\caption{\label{tab:SPIRIT} P-values for different mediator/outcome combinations for the SPIRIT data set, where we highlight the numbers in red if $p \leq 0.05$, in orange if $0.05 < p \leq 0.1$ and in yellow if $0.1 < p \leq 0.2$.}
\centering
\begin{tabular}[t]{|l|l|r|r|r|r|}
\hline
Outcome & Mediator & CSMT & ABtest & MaxP & Sobel\\
\hline
\hline
Interleukin-6 & Acetic Acid & 0.70 & 0.43 & 0.63 & 0.64\\
\hline
Interleukin-6 & Propionic Acid & 0.45 & 0.26 & 0.53 & 0.57\\
\hline
Interleukin-6 & Isobutyric Acid & 0.30 & 0.60 & 0.69 & 0.71\\
\hline
Interleukin-6 & Butyric Acid & \cellcolor{orange}0.09 & 0.34 & 0.47 & 0.49\\
\hline
Interleukin-6 & Methylbutyric Acid & 0.76 & 0.71 & 0.88 & 0.88\\
\hline
Interleukin-6 & Valeric Acid & \cellcolor{yellow}0.11 & \cellcolor{yellow}0.15 & 0.41 & 0.48\\
\hline
Interleukin-6 & Hexanoic Acid & 0.70 & 0.63 & 0.78 & 0.78\\
\hline
C-reactive Protein & Acetic Acid & \cellcolor{orange}0.09 & 0.46 & 0.76 & 0.76\\
\hline
C-reactive Protein & Propionic Acid & \cellcolor{yellow}0.16 & 0.24 & 0.21 & 0.34\\
\hline
C-reactive Protein & Isobutyric Acid & 0.97 & 0.47 & 0.69 & 0.72\\
\hline
C-reactive Protein & Butyric Acid & \cellcolor{yellow}0.17 & 0.30 & 0.27 & 0.33\\
\hline
C-reactive Protein & Methylbutyric Acid & 0.98 & 0.74 & 0.88 & 0.88\\
\hline
C-reactive Protein & Valeric Acid & \cellcolor{red}0.04 & \cellcolor{yellow}0.20 & \cellcolor{yellow}0.17 & 0.27\\
\hline
C-reactive Protein & Hexanoic Acid & 0.67 & 0.73 & 0.81 & 0.85\\
\hline
\end{tabular}
\end{table}

\section{Acknowledgement}
The authors thank Drs. Noel Mueller and Curtis Tilves from the University of Colorado, and Dr. Jessica Yeh from Johns Hopkins University, for providing the clinical data from the SPIRIT trial.

\section{Supplementary Materials}
The supplementary materials contain some additional numerical results.


\subsection{Additional numerical results for Example 1}
We present the numerical results for Example 1 of the main paper with $r=0.5.$ The QQ plots of the p-values and the bar plots of the empirical sizes are shown in Figures \ref{c2-q} and \ref{c2-b}, respectively. It is noted that the AB test exhibits an inflated size, while the MaxP and Sobel tests are conservative, displaying non-uniform p-values under both types of null hypotheses. Additionally, the p-value distribution of CSMT is closer to a uniform distribution at low quantile levels, resulting in a more accurate size compared to the other methods.

\begin{figure}[H]
    \centering
    \includegraphics[width = 0.8\linewidth]{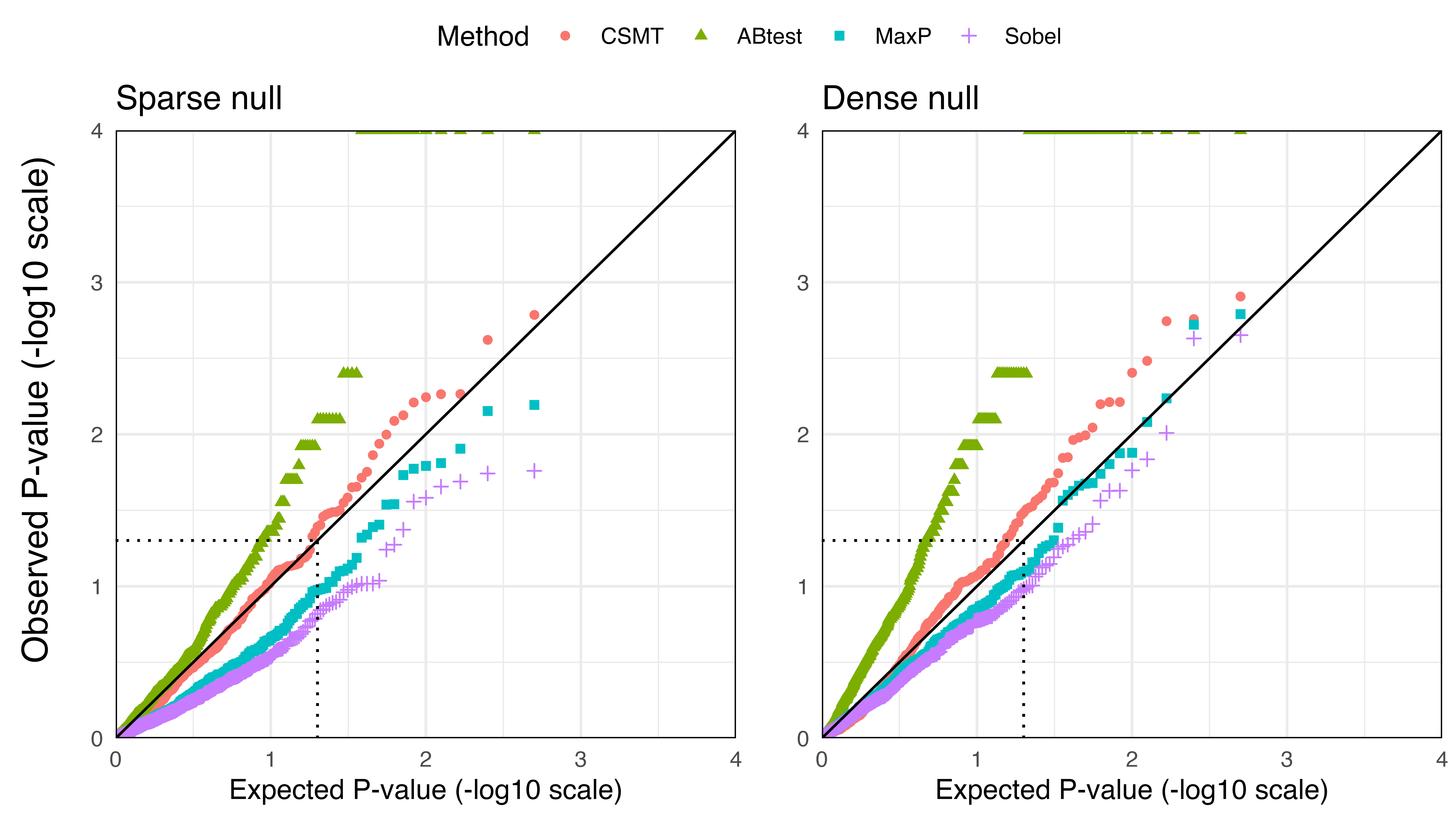}
    \caption{QQ plots of the p-values from the four competing methods under the sparse null (left) and dense null (right) with $r = 0.5$. The nominal level is 5\%.}
    \label{c2-q}
\end{figure}

\begin{figure}[H]
    \centering
    \includegraphics[width = 0.8\linewidth]{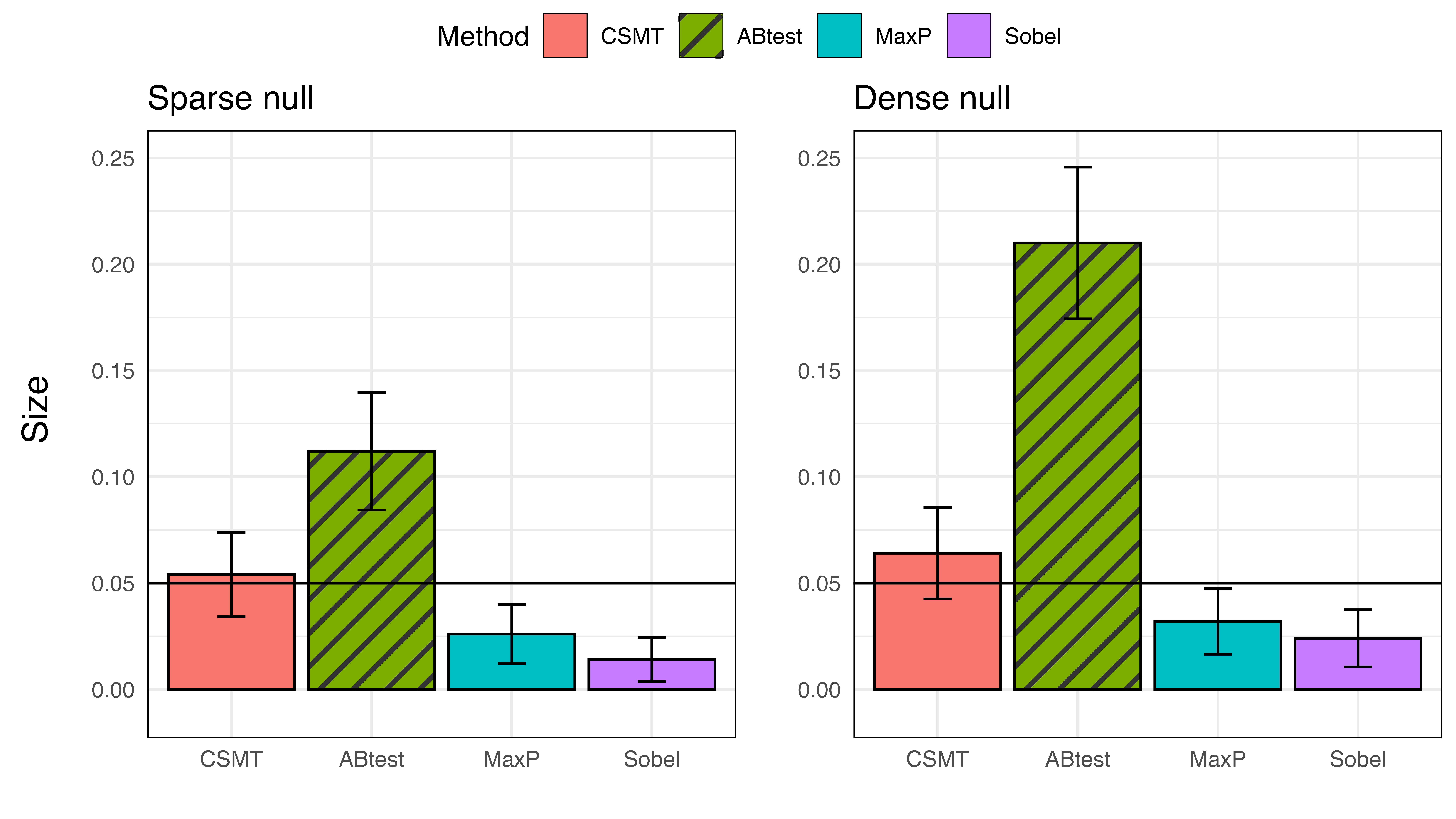}
    \caption{Empirical sizes for the four competing methods under the sparse null (left) and dense null (right) with $r = 0.5$. The nominal level is 5\%.}
    \label{c2-b}
\end{figure}

\subsection{Empirical size and power for CSMT with an adaptive choice of K}
In this section, we examine the performance of CSMT with \( K = \lfloor 0.5 n^{1/2} \rfloor \). We follow the settings outlined in Examples 1-2 of the main paper for \( n \in \{150, 250, 350, 450, 550\} \). Figure \ref{var_K2} shows that CSMT consistently demonstrates the most accurate size across all scenarios. In contrast, the ABtest exhibits an inflated size, while the MaxP and Sobel tests are undersized across all cases. Regarding empirical power, CSMT consistently outperforms both the MaxP and Sobel tests and can surpass the ABtest when the signal is strong. The ABtest remains the most powerful for weak signals.



\begin{figure}[H]
    \centering
    \includegraphics[width=\linewidth]{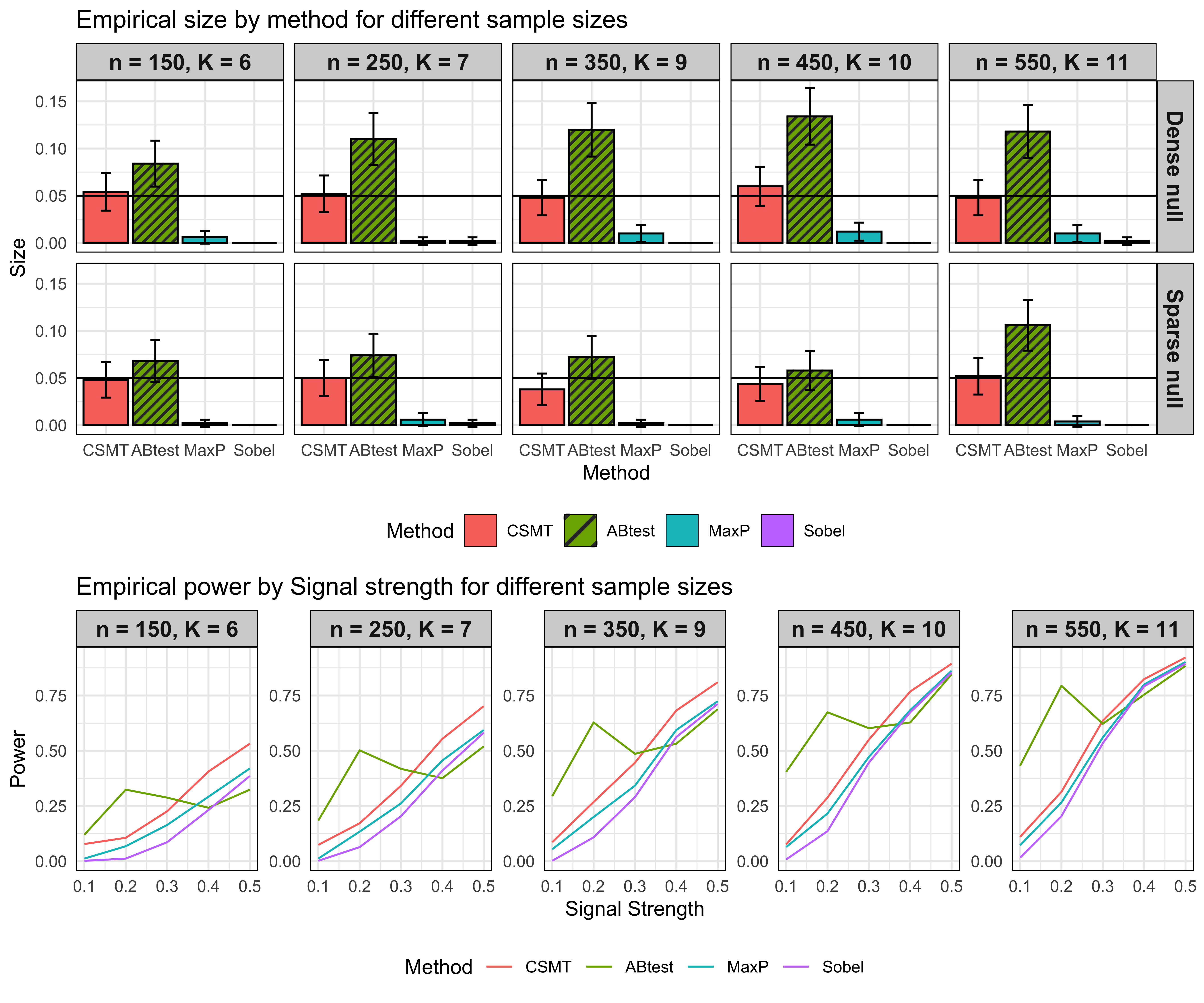}
    \caption{Empirical sizes and powers for the four competing methods with $n\in\{150,250,350,450,550\}$, where $K=\lfloor 0.5 n^{1/2} \rfloor$ for CSMT.}
    \label{var_K2}
\end{figure}

\bibliographystyle{unsrtnat}
\bibliography{reference} 

\begin{thebibliography}{17}
\providecommand{\natexlab}[1]{#1}
\providecommand{\url}[1]{\texttt{#1}}
\expandafter\ifx\csname urlstyle\endcsname\relax
  \providecommand{\doi}[1]{doi: #1}\else
  \providecommand{\doi}{doi: \begingroup \urlstyle{rm}\Url}\fi

\bibitem[Vinolo et~al.(2011)Vinolo, Rodrigues, Nachbar, and Curi]{vinolo2011regulation}
Marco~AR Vinolo, Hosana~G Rodrigues, Renato~T Nachbar, and Rui Curi.
\newblock Regulation of inflammation by short chain fatty acids.
\newblock \emph{Nutrients}, 3\penalty0 (10):\penalty0 858--876, 2011.

\bibitem[Mueller et~al.(2021)Mueller, Differding, Zhang, Maruthur, Juraschek, Miller~III, Appel, and Yeh]{mueller2021metformin}
Noel~T Mueller, Moira~K Differding, Mingyu Zhang, Nisa~M Maruthur, Stephen~P Juraschek, Edgar~R Miller~III, Lawrence~J Appel, and Hsin-Chieh Yeh.
\newblock Metformin affects gut microbiome composition and function and circulating short-chain fatty acids: a randomized trial.
\newblock \emph{Diabetes care}, 44\penalty0 (7):\penalty0 1462--1471, 2021.

\bibitem[Imai et~al.(2010)Imai, Keele, and Tingley]{imai2010general}
Kosuke Imai, Luke Keele, and Dustin Tingley.
\newblock A general approach to causal mediation analysis.
\newblock \emph{Psychological methods}, 15\penalty0 (4):\penalty0 309, 2010.

\bibitem[Huang and Cai(2016)]{huang2016mediation}
Yen-Tsung Huang and Tianxi Cai.
\newblock Mediation analysis for survival data using semiparametric probit models.
\newblock \emph{Biometrics}, 72\penalty0 (2):\penalty0 563--574, 2016.

\bibitem[VanderWeele and Vansteelandt(2010)]{vanderweele2010odds}
Tyler~J VanderWeele and Stijn Vansteelandt.
\newblock Odds ratios for mediation analysis for a dichotomous outcome.
\newblock \emph{American journal of epidemiology}, 172\penalty0 (12):\penalty0 1339--1348, 2010.

\bibitem[VanderWeele(2011)]{vanderweele2011causal}
Tyler~J VanderWeele.
\newblock Causal mediation analysis with survival data.
\newblock \emph{Epidemiology}, 22\penalty0 (4):\penalty0 582--585, 2011.

\bibitem[Sobel(1982)]{sobel1982asymptotic}
Michael~E Sobel.
\newblock Asymptotic confidence intervals for indirect effects in structural equation models.
\newblock \emph{Sociological methodology}, 13:\penalty0 290--312, 1982.

\bibitem[MacKinnon et~al.(2002)MacKinnon, Lockwood, Hoffman, West, and Sheets]{mackinnon2002comparison}
David~P MacKinnon, Chondra~M Lockwood, Jeanne~M Hoffman, Stephen~G West, and Virgil Sheets.
\newblock A comparison of methods to test mediation and other intervening variable effects.
\newblock \emph{Psychological methods}, 7\penalty0 (1):\penalty0 83, 2002.

\bibitem[Glonek(1993)]{glonek1993behaviour}
GFV Glonek.
\newblock On the behaviour of wald statistics for the disjunction of two regular hypotheses.
\newblock \emph{Journal of the Royal Statistical Society: Series B (Methodological)}, 55\penalty0 (3):\penalty0 749--755, 1993.

\bibitem[Liu and Xie(2020)]{liu2020cauchy}
Yaowu Liu and Jun Xie.
\newblock Cauchy combination test: a powerful test with analytic p-value calculation under arbitrary dependency structures.
\newblock \emph{Journal of the American Statistical Association}, 115\penalty0 (529):\penalty0 393--402, 2020.

\bibitem[He et~al.(2023)He, Song, and Xu]{10.1093/jrsssb/qkad129}
Yinqiu He, Peter X~K Song, and Gongjun Xu.
\newblock {Adaptive bootstrap tests for composite null hypotheses in the mediation pathway analysis}.
\newblock \emph{Journal of the Royal Statistical Society Series B: Statistical Methodology}, 86\penalty0 (2):\penalty0 411--434, 11 2023.
\newblock ISSN 1369-7412.
\newblock \doi{10.1093/jrsssb/qkad129}.
\newblock URL \url{https://doi.org/10.1093/jrsssb/qkad129}.

\bibitem[Yeh et~al.(2020)Yeh, Maruthur, Wang, Jerome, Dalcin, Tseng, White, Miller, Juraschek, Mueller, et~al.]{Yeh2020}
H.C. Yeh, N.~Maruthur, N.Y. Wang, G.~Jerome, A.~Dalcin, E.~Tseng, K.~White, E.~Miller, S.~Juraschek, N.~Mueller, et~al.
\newblock A randomized controlled trial of behavioral weight loss and metformin on {IGFs} in cancer survivors.
\newblock \emph{Obesity}, 28:\penalty0 17, 2020.

\bibitem[Yeh et~al.(2021)Yeh, Maruthur, Wang, Jerome, Dalcin, Tseng, White, Miller, Juraschek, Mueller, et~al.]{Yeh2021}
H.C. Yeh, N.M. Maruthur, N.Y. Wang, G.J. Jerome, A.T. Dalcin, E.~Tseng, K.~White, E.R. Miller, S.P. Juraschek, N.T. Mueller, et~al.
\newblock Effects of behavioral weight loss and metformin on insulin-like growth factors in cancer survivors: A randomized trial.
\newblock \emph{J. Clin. Endocrinol. Metab.}, 2021.
\newblock \doi{10.1210/clinem/dgab266}.

\bibitem[De~La Cuesta-Zuluaga et~al.(2017)De~La Cuesta-Zuluaga, Mueller, Corrales-Agudelo, Vel{\'a}squez-Mej{\'\i}a, Carmona, Abad, and Escobar]{de2017metformin}
Jacobo De~La Cuesta-Zuluaga, Noel~T Mueller, Vanessa Corrales-Agudelo, Eliana~P Vel{\'a}squez-Mej{\'\i}a, Jenny~A Carmona, Jos{\'e}~M Abad, and Juan~S Escobar.
\newblock Metformin is associated with higher relative abundance of mucin-degrading akkermansia muciniphila and several short-chain fatty acid--producing microbiota in the gut.
\newblock \emph{Diabetes care}, 40\penalty0 (1):\penalty0 54--62, 2017.

\bibitem[Tedelind et~al.(2007)Tedelind, Westberg, Kjerrulf, and Vidal]{tedelind2007anti}
Sofia Tedelind, Fredrik Westberg, Martin Kjerrulf, and Alexander Vidal.
\newblock Anti-inflammatory properties of the short-chain fatty acids acetate and propionate: a study with relevance to inflammatory bowel disease.
\newblock \emph{World journal of gastroenterology: WJG}, 13\penalty0 (20):\penalty0 2826, 2007.

\bibitem[Tilves et~al.(2022)Tilves, Yeh, Maruthur, Juraschek, Miller, White, Appel, and Mueller]{tilves2022increases}
Curtis Tilves, Hsin-Chieh Yeh, Nisa Maruthur, Stephen~P Juraschek, Edgar Miller, Karen White, Lawrence~J Appel, and Noel~T Mueller.
\newblock Increases in circulating and fecal butyrate are associated with reduced blood pressure and hypertension: results from the spirit trial.
\newblock \emph{Journal of the American Heart Association}, 11\penalty0 (13):\penalty0 e024763, 2022.

\bibitem[Chen et~al.(2022)Chen, Kong, Zhao, Zhao, Hao, Irshad, Lei, Kulyar, Bhutta, Ashfaq, et~al.]{chen2022sodium}
Xiushuang Chen, Qinghui Kong, Xiaoxiao Zhao, Chenxi Zhao, Pin Hao, Irfan Irshad, Hongjun Lei, Muhammad Fakhar-e-Alam Kulyar, Zeeshan~Ahmad Bhutta, Hassan Ashfaq, et~al.
\newblock Sodium acetate/sodium butyrate alleviates lipopolysaccharide-induced diarrhea in mice via regulating the gut microbiota, inflammatory cytokines, antioxidant levels, and nlrp3/caspase-1 signaling.
\newblock \emph{Frontiers in microbiology}, 13:\penalty0 1036042, 2022.

\end{thebibliography}
\end{document}